\newcommand{\F}[0]{\mathbb{F}}
\newcommand{\Z}[0]{\mathbb{Z}}
\newcommand{\trivr}[0]{\text{triv}(\rho)}
\newtheorem{thm}{Theorem}[section] 
\newtheorem{lem}[thm]{Lemma} 
\newtheorem{ex}[thm]{Example}
\newtheorem{defn}[thm]{Definition}
\definecolor{cerulean}{rgb}{0,.48,.65} 
\definecolor{magenta}{rgb}{.5,0,.5} 
\definecolor{dpurple}{rgb}{.3,.1,.8} 
\definecolor{dred}{rgb}{.5,0,0} 
\definecolor{green}{rgb}{0,.5,0} 
\definecolor{blue}{rgb}{0,0,0.7} 
\definecolor{dblue}{rgb}{0,0,0.6} 
\definecolor{black}{rgb}{0,0,0} 
\definecolor{dgreen}{rgb}{0,.3,0} 
\definecolor{vdred}{rgb}{.3,0,0} 
\definecolor{red}{rgb}{1,0,0} 
\definecolor{salmon}{rgb}{0.98,0.50,0.45} 
\definecolor{gray}{rgb}{.5,.5,.5} 
\definecolor{seagreen}{rgb}{0.13,0.70,0.67} 
\definecolor{chartreuse}{rgb}{0.40,0.80,0.00}
\definecolor{cornflower}{rgb}{0.39,0.58,0.93} 
\definecolor{gold}{rgb}{0.80,0.68,0.00}
\newcommand{\G}[0]{\mathcal{G}} 
\newcommand{\HH}[0]{\mathcal{H}} 
\newcommand{\repH}[0]{\mathcal{R}(\HH)} 
\newcommand{\pr}[0]{\pi}  
\newcommand{\cwe}[0]{cwe}
\newcommand{\C}[0]{\mathcal{C}}
\newcommand{\innerproduct}[2]{\langle #1, #2\rangle}
\begin{document}

\title{A duality for nonabelian group codes}
\author{Prairie Wentworth-Nice\\
\small{Cornell University, Ithaca, NY}}
\date{}
\maketitle


\section*{Abstract}
In 1962, Jesse MacWilliams published a set of formulas for linear and abelian group codes that among other applications, were incredibly valuable in the study of self-dual codes. Now called the MacWilliams Identities, her results relate the weight enumerator and complete weight enumerator of a code to those of its dual code.  A similar set of MacWilliams identities has been proven to exist for many other types of codes. In 2013, Dougherty, Sol\'{e}, and Kim published a list of fundamental open questions in coding theory. Among them, Open Question 4.3: ``Is there a duality and MacWilliams formula for codes over non-Abelian groups?" In this paper, we propose a duality for nonabelian group codes in terms of the irreducible representations of the group. We show that there is a Greene's Theorem and MacWilliams Identities which hold for this notion of duality. When the group is abelian, our results are equivalent to existing formulas in the literature.

\section{Introduction}

It is well known in the theory of error-correcting codes that some nonlinear codes are optimal in the sense that they contain the maximum number of codewords possible for a given word length and minimum Hamming distance. In \cite{Hammons1994}, Hammons et al. showed that some of these codes (the Kerdock and Preparata codes among others) are linear when viewed as codes over $\Z_4$ and that others (namely, the Golay code) cannot be linear over $\Z_4.$ This led to further study of codes over $\Z_4$ as well as codes over other structures \cite{Bonnecaze1997}, \cite{Dougherty2006}, \cite{Wood1999}. Two important tools that have been used in the study of such codes are Greene's Theorem and the MacWilliams identities. 

Greene's Theorem relates the weight enumerator of a linear code to the Tutte polynomial of an associated matroid. First proven by Greene in \cite{Greene1976}, it has since been generalized to other codes and higher order weight enumerators as well \cite{Barg1997}, \cite{Britz2002},\cite{Chan2013}. The MacWilliams identities, which relate the code word distribution of a code $\C$ to the distribution of the codewords in the dual code $\C^{\perp}$ were originally proved by MacWilliams for linear codes in \cite{MacWilliams1963}. They have since been generalized to codes over $\Z_m$ \cite{Klemm1987}, and beyond \cite{Britz2002}, \cite{Britz2010}, \cite{Wood1999}. 

In Section 2, we describe a notion of group duality that generalizes the duality of codes over abelian groups to codes over non-abelian groups. In Section 3, we show that Greene's theorem holds for this notion of a dual code. Finally, in Section 4 we give MacWilliams identities for codes over non-abelian groups.

\section{Group Codes}

\noindent Let $\Gamma$ be a finite group and $\G=\Gamma^n$ the direct product of $n$ copies of $\Gamma.$ An $[n,r']$ \textbf{$\Gamma$-code}, $\HH,$ is a subgroup of  $\Gamma^n$ with
\begin{equation*} 
r':=\log_{|\Gamma|} |\HH|.
\end{equation*}
A \textbf{code word} in $\HH$ is an element $h\in \HH.$ When $\Gamma$ is the additive group of a field $\F$ and $\HH$ is a subspace of the vector space $\G=\F^n$, then $\HH$ is the usual $[n,r']$ linear code over $\F.$ (For more on linear codes, see \cite{Macwilliams1977}.)

As is done for linear codes, we can define a weight enumerator for $\Gamma$-codes which keeps track of how many words there are in $\HH$ at every distance from the identity. Let $\overline{h}=(h_1, \ldots h_n) \in \HH$ and $\text{id}(\overline{h})=\left\{i |h_i=\text{the identity element of $\Gamma$} \right\}.$  Define $w(\overline{h})=n-|\text{id}(\overline{h})|.$ The \textbf{weight enumerator} of $\HH$ is 
 \begin{equation*}
 W_{\HH}(z)=\sum_{\overline{h} \in \HH} z^{w(\overline{h})}.
 \end{equation*}
 When $\Gamma$ is abelian, this is a standard weight enumerator.

To understand the distribution of codewords in more detail, we can use a generalization of the complete weight enumerator for abelian groups. Instead of keeping track of which elements of $\Gamma$ appear in a code word, we keep track of which conjugacy class of $\Gamma$ elements they belong to. Let $k$ be the number of conjugacy classes of $\Gamma$ and choose an ordering of the conjugacy classes $C_1, \ldots, C_k.$ 

Now, let $\overline{i}=(i_1, i_2, \ldots, i_n)$ with $ 0 \leq i_1,  \ldots, i_n \leq k$ and define $A_{\overline{i}}$ to be the number of elements $(h_1, \ldots, h_n) \in \HH$ such that $h_j$ is in conjugacy class $i_j$ of $\Gamma$. Then the \textbf{complete weight enumerator} of  $\HH$ is $$\text{\cwe}_{\HH}(y_1, y_2, \ldots ,y_k)=\sum_{\overline{i}} A_{\overline{i}} \prod_{i \in \overline{i}} y_i.$$ When $\Gamma$ is abelian every element of $\Gamma$ is in its own conjugacy class and this is the usual complete weight enumerator described in the literature. See for example, \cite{OpenQs}.

\subsection{The Dual of a Group}

The MacWilliams identities relate the weight enumerator and complete weight enumerator of a code to those of its dual code. To generalize this, we first need to understand what the dual code is. For linear codes, words in the dual correspond to vectors that are orthogonal to all words in the vector space. For a group, the dual is not so simple. Julian showed in \cite{Julian2017} that under a definition of duality involving subgroup lattices, there exist subgroups $\HH$ for which the dual object cannot be another subgroup of $\Gamma^n.$ In \cite{Forney1992}, Forney showed that nonabelian group codes are generally worse than codes over abelian groups or fields. We define a duality by generalizing the inner product definition of duality. Instead of being another group, our dual object to $\HH$, which we call $\repH,$ is a collection of irreducible representations defined in \cite{Swartz2023}. 

For a set $W$, let $\mathbb{C}[W]$ be the $\mathbb{C}-$vector space with basis $\{v_{w}:w\in W \}.$ For $\Gamma$ a finite group, $\mathbb{C}[\Gamma]$ is also the left-regular representation of $\Gamma$. That is, if $\displaystyle \sum_{\gamma \in \Gamma} c_{\gamma} v_{\gamma}$ is in $\mathbb{C}[\Gamma]$ and $\gamma'$ is in $\Gamma,$ then $\displaystyle \gamma' \cdot \sum_{\gamma \in \Gamma} c_{\gamma} v_{\gamma}=\sum_{\gamma \in \Gamma} c_{\gamma} v_{\gamma'\gamma}.$ We denote the set of irreducible representations of $\Gamma$ by $\widehat{\Gamma}.$ Let $\displaystyle \G=\prod_{i=1}^n\Gamma_{i}.$ Then the irreducible representations of $\widehat{\G}$ are the tensor products of the form $\displaystyle \rho=\bigotimes_{i \in E} \rho_i$ where $\rho_i \in \widehat{\Gamma_{i}}$ \cite{Serre}. The representation-based dual of $\HH \subseteq \G$ is one of the following three cryptomorphic objects:

\begin{defn}
(Definition 9.1 from \cite{Swartz2023})
\begin{itemize}
\item $\mathcal{R}_1(\HH)$: The subrepresentation of $\mathbb{C}[\G]$ consisting of all elements whose coefficients are constant on the right cosets of $\HH$. Equivalently,
$$\mathcal{R}_1(\HH)=\left\{\sum_{g\in \G}c_gv_g \in \mathbb{C}[\G] : \text{for all } g\in\G, h\in \HH, c_{g}=c_{gh} \right\}. $$

\item $\mathcal{R}_2(\HH)$: The permutation representation on $\mathbb{C}[\G/\HH]$ induced from the left $\G-$action on right cosets $\G/\HH$, $g'\cdot (g\HH)=(g'g)\HH.$

\item $\repH$: The multiset of irreducible representations in $\widehat{\G}$ whose direct sum is equivalent to $\mathcal{R}_1(\HH)$ and $\mathcal{R}_2(\HH).$
\end{itemize}

\end{defn}

We will make use of the following lemma from \cite{Swartz2023} about $\repH$ in a later section. For $\displaystyle \G=\prod_{i=1}^n\Gamma_{i},$ we define $\displaystyle \G_S=\prod_{i \in S} \Gamma_{i}$ for each $S\subseteq E.$ Let $\displaystyle \rho=\bigotimes_{i\in S}\rho_{i}$ be an element of $\widehat{\G_S}.$ The \textbf{extension of $\rho$ to $\widehat{\G}$} is $\displaystyle \overline{\rho}=\rho=\bigotimes_{i\in S}\overline{\rho_{i}},$ where $\overline{\rho_{i}}=\rho_{i}$ for all $i \in S$, and for $i \not\in S$, $\overline{\rho_i}$ is the trivial representation of $\Gamma_i.$

\begin{lem} (Lemma 9.11 in \cite{Swartz2023}.) Let $S\subseteq E.$ Then $\rho$ is in $\mathcal{R}(\HH_S)$ if and only if $\overline{\rho}$ is in $\repH$. 

\end{lem}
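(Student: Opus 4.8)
The plan is to reduce membership in the representation-based dual to a statement about fixed vectors, where the extension-by-trivial-representations operation becomes transparent. Recall that the multiset $\repH$ is the decomposition of $\mathcal{R}_2(\HH)=\mathbb{C}[\G/\HH]$, which is exactly the induced representation $\operatorname{Ind}_{\HH}^{\G}\mathbf{1}$ of the trivial representation of $\HH$. By Frobenius reciprocity, the multiplicity of an irreducible $\rho\in\widehat{\G}$ in this induced representation is
\begin{equation*}
\langle \operatorname{Ind}_{\HH}^{\G}\mathbf{1},\ \rho\rangle_{\G}=\langle \mathbf{1},\ \operatorname{Res}_{\HH}\rho\rangle_{\HH}=\dim \rho^{\HH},
\end{equation*}
the dimension of the space of $\HH$-fixed vectors. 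Hence $\rho\in\repH$ if and only if $\dim\rho^{\HH}>0$, and the identical characterization applies to $\mathcal{R}(\HH_S)$ with $\HH_S=\pi_S(\HH)$ the projection of the code onto the coordinates in $S$: namely $\rho\in\mathcal{R}(\HH_S)$ iff $\dim\rho^{\HH_S}>0$. So it suffices to prove that $\dim\overline{\rho}^{\HH}>0$ exactly when $\dim\rho^{\HH_S}>0$.

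Next I would compute how $\HH$ acts on the space of $\overline{\rho}$. Write $\overline{\rho}=\bigotimes_{i\in E}\overline{\rho_i}$, where $\overline{\rho_i}=\rho_i$ for $i\in S$ and $\overline{\rho_i}$ is the (one-dimensional) trivial representation for $i\notin S$. For any codeword $\overline{h}=(h_1,\dots,h_n)\in\HH$ and any pure tensor $v=\bigotimes_{i}v_i$,
\begin{equation*}
\overline{\rho}(\overline{h})\,v=\bigotimes_{i\in E}\overline{\rho_i}(h_i)\,v_i=\Bigl(\bigotimes_{i\in S}\rho_i(h_i)v_i\Bigr)\otimes\Bigl(\bigotimes_{i\notin S}v_i\Bigr),
\end{equation*}
since each trivial factor fixes its vector. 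Thus the action of $\overline{h}$ on $\overline{\rho}$ depends only on the projection $\pi_S(\overline{h})\in\G_S$, and under the identification of the underlying space of $\overline{\rho}$ with that of $\rho$ (the trivial factors contribute only scalars) it agrees with the action of $\rho(\pi_S(\overline{h}))$. Consequently the fixed spaces coincide: $\overline{\rho}^{\HH}=\rho^{\pi_S(\HH)}=\rho^{\HH_S}$.

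Combining the two steps gives $\dim\overline{\rho}^{\HH}=\dim\rho^{\HH_S}$, so one dimension is positive precisely when the other is, which is the desired equivalence. The step that requires the most care is the bookkeeping in the first paragraph: one must confirm that $\HH_S$ is defined as the coordinate projection $\pi_S(\HH)$ rather than the shortened subcode of words supported on $S$, since only the projection yields $\overline{\rho}^{\HH}=\rho^{\HH_S}$; shortening would instead enlarge the fixed space and break the biconditional. A quick diagonal check (for instance $\HH=\{(g,g):g\in\Gamma\}\subseteq\Gamma\times\Gamma$ with $S=\{1\}$, where only the trivial $\rho_1$ survives on either side) confirms that projection is the correct operation and sanity-checks the fixed-vector computation. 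Everything else is a direct application of Frobenius reciprocity together with the fact that trivial tensor factors act as the identity.
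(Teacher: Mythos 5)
The paper does not actually prove this lemma: it is imported verbatim from \cite{Swartz2023} (Lemma 9.11) and used as a black box, so there is no internal argument to compare yours against. Judged on its own, your proof is correct. The identification $\mathcal{R}_2(\HH)=\mathbb{C}[\G/\HH]=\operatorname{Ind}_{\HH}^{\G}\mathbf{1}$, the Frobenius reciprocity step giving ``$\rho\in\repH$ iff $\dim\rho^{\HH}>0$,'' and the computation $\overline{\rho}^{\HH}=\rho^{\pi_S(\HH)}$ (since the trivial factors act as the identity, so the action of $\overline{h}$ factors through $\pi_S$) are all sound, and together they yield the biconditional. Your flagged subtlety is also resolved correctly: $\HH_S$ must be the projection $\pi_S(\HH)$, not a shortened subcode, and the paper's own usage confirms this reading --- the polymatroid rank is defined by $r(S)=\log_{|\Gamma|}|\pi_S(\HH)|$, and the proof of Theorem \ref{GreeneDual} uses this lemma precisely to get $|\mathcal{R}(\HH_{E-S})|=|\Gamma|^{n-|S|}/|\HH_{E-S}|$, which only holds for the projection. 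It is worth noting that your route is very much in the spirit of the paper itself: the proof of Theorem \ref{NewMacWilliams} invokes exactly the same Frobenius reciprocity identity $\langle\repH,\rho_{\overline{j}}\rangle_{\G}=\langle\mathbf{1},\operatorname{Res}_{\HH}^{\G}\rho_{\overline{j}}\rangle_{\HH}$, so your argument makes the borrowed lemma self-contained using no tools beyond those the paper already deploys.
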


\vspace{12pt}

\subsection{Dual of a $\Gamma$-Code}

To define the dual code to $\HH$ we view $\repH$ as follows. The \textbf{code words} of $\repH$ are the $n$ tuples of irreducible representations of $\Gamma$ which correspond to tensor products that appear in $\repH$. So words in $\repH$ have length $n$ and the sum of their dimensions is equal to $|\G /  \HH|.$ When $\Gamma$ is abelian, every element is in its own conjugacy class. Choose an isomorphism, $\phi,$ between the elements of $\widehat{\Gamma}$ and the elements of $\Gamma.$ Extend this component-wise to an isomorphism $\overline{\phi}: \widehat{\G} \rightarrow \G.$ The dual code of $\HH$ is obtained by applying this isomorphism to the subgroup $\{\rho \in \widehat{\G} : \HH \subseteq \ker\rho\}.$ Our $\repH$ is the same subgroup of representations as in \cite[Definition 9.1]{Swartz2023}.

\begin{ex} 
\label{rHexample}
Let $\Gamma =S_3,$ $\G=\Gamma\times \Gamma$ and let 
$$\HH=\{((1),(1)),((12),(123)),((1),(132)),((12),(1)),((1),(123)),((12),(132))\}.$$
To compute $\repH$, we first compute the character of the permutation representation $\mathcal{R}_2(\HH).$ Let $\chi$ be this character. To compute the evaluation of $\chi$ at the conjugacy class $((12),(1))$ we look at the number of fixed points when the cosets of $\HH$ are acted on by $((12),(1))$ on the left. We find 
$$((12),(1))\cdot \HH=\HH\text{ and } ((12),(1))\cdot ((12),(12))\HH=((12),(12))\HH$$ 
are the only cosets fixed by this action so $\chi((12),(1))=2.$ A similar computation gives $\chi((1),(1))=6$, $\chi((1),(123))=6,$ $\chi((12),(123))=2,$ and $\chi=0$ when evaluated on all other conjugacy classes.

Once we know $\chi$, we can use the usual inner product on characters (see \cite{Serre}) to determine the irreducible representations which comprise $\mathcal{R}_2(\HH).$ We denote the trivial, sign, and two dimensional representations of $S_3$ by $1, s,$ and $t,$ respectively. Any possible irreducible representation of $S_3 \times S_3$ is a tensor product of two of these representations \cite{Serre}. Let $\chi_1$, $\chi_s,$ and $\chi_t$ denote the characters of the representations of $S_3. $ We find  $$\langle \chi, \chi_1\otimes \chi_1 \rangle=\langle \chi, \chi_1\otimes \chi_s \rangle=\langle \chi, \chi_t\otimes \chi_1 \rangle=\langle \chi, \chi_t\otimes \chi_s \rangle=1$$
and the inner product of $\chi$ with all other irreducible representations of $S_3\times S_3$ is 0. Therefore, 
$$\repH=\{1\otimes 1, 1\otimes s, t\otimes 1, t\otimes s\}.$$
For a faster computation of $\repH$ we can use Frobenius Reciprocity. (See the proof of Theorem \ref{NewMacWilliams}.)

\end{ex}

As we did for $\HH$, we can define a one-variable weight enumerator and a complete weight enumerator for $\repH.$ For $\displaystyle \rho=\bigotimes \rho_i \in \repH,$ define $\trivr:=\{i \in [n] \ | \ \rho_{i}=\rho_0\}.$ where $\rho_0$ is the trivial representation for $\Gamma.$ The \textbf{weight enumerator} for $\repH$ is $$W_{\repH}(z)=\sum_{\rho \in \repH} \dim \rho \ z^{w(\rho)},$$ with $w(\rho)=n-|\text{triv}(\rho)|.$ Under the correspondence described above, the weight enumerator for the dual of abelian groups is equivalent to that described for instance in \cite{OpenQs}.

To define the complete weight enumerator of $\repH$ we once again consider the conjugacy classes of $\Gamma.$ Recall that the number of conjugacy classes of $\Gamma$ is in bijection with the number of irreducible representations of $\Gamma$. Fix an ordering of these representations, $\rho_1, \ldots, \rho_k,$ and denote the \textit{characters} of these representations by $\chi_1, \ldots, \chi_k.$ Let $\innerproduct{\chi_i}{\chi_j}$ be the usual inner product on the characters \cite{Serre}. Define $J:=\{\overline{j}=(j_1, j_2, \ldots, j_n)\mid  0 \leq j_1,  \ldots, j_n \leq k \}$ and the character $\chi_{\overline{j}}$ to be the character associated to the representation $\rho_{\overline{j}} \in \widehat{\G}.$ Recall that this implies $\chi_{\overline{j}}=\prod_{i=1}^n \chi_{j_i}.$ Let $\chi_{\repH}$ be the character for $\repH$ when viewed as a permutation representation. So the inner product $\innerproduct{\chi_{\repH}}{\chi_{\overline{j}}}$ counts the number of copies of $\displaystyle \rho_{\overline{j}}=\bigotimes_{i=1}^n \rho_{j_{i}}$ in $\repH.$ Finally, the \textbf{complete weight enumerator} of $\repH$ is given by $$\text{cwe}_{\repH}(x_1, \ldots, x_k)= \sum_{\overline{j}}\innerproduct{\repH}{\chi_{\overline{j}}} \prod_{j\in \overline{j}} x_{j}.$$ 

\subsection{Group Codes and Polymatroids}
For a linear code, it can be useful to consider the associated matroid. A \textbf{matroid} is a pair $(E, r)$ consisting of a ground set $E$ and a rank function $r: 2^E \rightarrow \Z$ which is normalized, monotone, and submodular \cite{Oxley2011}. For a linear code $\mathcal{C}$ over $\mathbb{F}$, the associated matroid $M[\mathcal{C}]$ is the pair $(E,r)$ with $E$ indexing the coordinates of $\mathcal{C}$ and $r$ the rank function defined by $r(S)=\log_{|\mathbb{F}|} |\pr_{S}(\mathcal{C})|$. Here, $\pr_{S}(\mathcal{C})$ is the projection of the code onto the coordinates in $S$.

Likewise, for a $\Gamma$-code $\HH$, there is an associated polymatroid $P(\HH)=(E, r).$ As before $E$ indexes the coordinates of $\HH.$ Now, $r$ is the rank function defined by $r(S)=\log_{|\Gamma|} |\pr_{S}(\HH)|$ for all $S\subseteq E$ with $\pr_{S}(\mathcal{H})$ indicating the projection of the subgroup $\mathcal{H}$ onto the indices in $S$ \cite{Swartz2023}. Here, $\pr_{S}(\HH)$ is the projection of $\HH$ onto the coordinates indexed by $S$. We define the \textbf{Tutte `polynomial'} of such a polymatroid in the same way that the Tutte polynomial is defined for matroids. That is 
$$T_{P(\HH)}(x,y)=\sum_{S\subseteq E} (x-1)^{r(E)-r(S)}(y-1)^{|S|-r(S)}$$
Note that for $P(\HH)$, the Tutte polynomial is rarely an actual polynomial. For more background on matroids and the Tutte polynomial see \cite{Brylawski1992} and \cite{Oxley2011}.

\section{Greene's Theorem}
A relationship between the weight enumerator of a linear code and the Tutte polynomial of the code's associated matroid was first given by Curtis Greene in 1976. 
\begin{thm}
\label{Greene} (Greene's theorem) \cite{Greene1976} Let $\mathcal{C}$ be an $\mathbb{F}$-linear code, $q=|\mathbb{F}|,$ and $d=\dim_{\mathbb{F}}\mathcal{C}.$ Then 
$$W_{\mathcal{C}}(t)=(1-t)^dt^{n-d}T_{M[\mathcal{C}]}\left(\frac{1+(q-1)t}{1-t}, \frac{1}{t}\right).$$

\end{thm}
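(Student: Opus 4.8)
The plan is to rewrite both sides of the identity as a single sum indexed by subsets $S \subseteq E$ and then check that the two subset sums agree term by term. The engine of the argument is the matroid rank function itself: by the definition of $M[\mathcal{C}]$ in the excerpt, $q^{r(S)} = |\pi_S(\mathcal{C})|$, so $r(S)$ records the dimension of the projection of the code onto the coordinates in $S$, and this is exactly the quantity that lets one count codewords that vanish on a prescribed set of positions.

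First I would put the weight enumerator in a form amenable to a subset expansion. Writing $z(c) = n - w(c)$ for the number of coordinates at which a codeword $c$ vanishes, we have $W_{\mathcal{C}}(t) = \sum_{c \in \mathcal{C}} t^{w(c)} = t^{n}\sum_{c \in \mathcal{C}} (1/t)^{z(c)}$. The key manipulation is to factor each summand as $(1/t)^{z(c)} = \prod_{i \in E}\bigl(1 + (1/t - 1)[c_i = 0]\bigr)$, where $[c_i = 0]$ denotes the Iverson bracket, and then interchange the two sums. This produces
$$W_{\mathcal{C}}(t) = t^{n}\sum_{S \subseteq E} (1/t - 1)^{|S|}\, N_S, \qquad N_S := \bigl|\{c \in \mathcal{C} : c_i = 0 \text{ for all } i \in S\}\bigr|.$$

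The crucial step, and the one that injects the matroid structure, is evaluating $N_S$. The codewords vanishing on $S$ are precisely the kernel of the projection $\pi_S \colon \mathcal{C} \to \mathbb{F}^{S}$, whose image has size $|\pi_S(\mathcal{C})| = q^{r(S)}$; hence $N_S = |\mathcal{C}|/|\pi_S(\mathcal{C})| = q^{d}/q^{r(S)} = q^{\,d - r(S)}$. Substituting this and simplifying $(1/t - 1)^{|S|} = (1-t)^{|S|}t^{-|S|}$, the $S$-term of $W_{\mathcal{C}}(t)$ becomes $(1-t)^{|S|}\,t^{\,n-|S|}\,q^{\,d - r(S)}$.

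Finally I would expand the right-hand side in the same shape. Setting $X = \frac{1+(q-1)t}{1-t}$ and $Y = \frac{1}{t}$, one computes $X - 1 = \frac{qt}{1-t}$ and $Y - 1 = \frac{1-t}{t}$, so that, using $r(E) = d$, the $S$-term of $(1-t)^d t^{n-d}\,T_{M[\mathcal{C}]}(X,Y)$ is $(1-t)^d t^{n-d}\bigl(\tfrac{qt}{1-t}\bigr)^{d - r(S)}\bigl(\tfrac{1-t}{t}\bigr)^{|S| - r(S)}$. Collecting the powers of $q$, of $(1-t)$, and of $t$ separately, the exponents telescope to $q^{\,d-r(S)}$, $(1-t)^{|S|}$, and $t^{\,n-|S|}$ respectively, which matches the weight-enumerator term exactly; since the two subset sums agree term by term, the identity follows. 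I expect the only real obstacle to be this final exponent bookkeeping, which is routine but must be done carefully; the genuine conceptual content lies entirely in the count $N_S = q^{\,d - r(S)}$, where the matroid of the code does its work.
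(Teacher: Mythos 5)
Your proof is correct, and every step checks out: the Iverson-bracket factorization $(1/t)^{z(c)}=\prod_{i\in E}\bigl(1+(1/t-1)[c_i=0]\bigr)$, the count $N_S=q^{\,d-r(S)}$ via the kernel of $\pi_S$ restricted to $\mathcal{C}$, and the exponent bookkeeping on the Tutte side all hold. It is worth noting, however, that the paper never proves this theorem at all --- it is quoted from Greene's 1976 paper --- so the natural comparison is with the paper's proof of the analogous result for the dual object, Theorem \ref{GreeneDual}. There the Tutte-polynomial side is expanded into a subset sum exactly as you do (with the count $|\Gamma|^{n-|S|}/|\HH_{E-S}|$ playing the role of your $q^{\,d-r(S)}$), but the weight-enumerator side is converted into a subset sum by a genuinely different device: following Xue, the variable is interpreted as the probability of a coin landing heads, and the identity emerges from computing the probability of winning a coin-flipping game in two ways. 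Your route replaces that probabilistic double count with a purely algebraic expansion, which buys you something concrete: the identity is established as a polynomial identity valid for all $t$, with no need to restrict to $t\in(0,1)$ as the probabilistic formulation requires (note the hypothesis ``for all $z\in(0,1)$'' in Theorem \ref{GreeneDual}). The probabilistic argument, in exchange, offers a transparent combinatorial interpretation and is the form that Xue and the paper adapt to the representation-theoretic dual, where the analogue of your kernel count is supplied by the Extension Lemma of \cite{Swartz2023} rather than by linear algebra.
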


Xue first proved that an analog of Greene's theorem holds for $\Gamma$-codes and their associated polymatroids by interpreting $t$ as a probability \cite{Xue2021}. An alternative proof using deletion and contraction is given in \cite{Swartz2023}.

\begin{thm}
\label{GreeneGroup} (Greene's theorem for group codes) \cite{Xue2021} Let $\Gamma$ be a finite group, $q=|\Gamma|,$ and $\HH$ a $\Gamma$-code of length $n\geq 1.$ In addition, set $P(\HH)$ to be the polymatroid $(E, r_{\HH}),$ where $E=[n].$ Then,
$$W_{\HH}(t)=t^{n-r(E)}(1-t)^{r(E)}T_{P(\HH)}\left(\frac{1+(q-1)t}{1-t}, \frac{1}{t}\right).$$

\end{thm}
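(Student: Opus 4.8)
The plan is to prove the group-code Greene's theorem by reducing it to a probabilistic identity, following the interpretation of $t$ as a deletion probability. First I would expand the right-hand side using the definition of the Tutte polynomial of the polymatroid $P(\HH)=(E,r)$. Substituting $x=\frac{1+(q-1)t}{1-t}$ and $y=\frac{1}{t}$ into
\[
T_{P(\HH)}(x,y)=\sum_{S\subseteq E}(x-1)^{r(E)-r(S)}(y-1)^{|S|-r(S)},
\]
I would compute the prefactors: $x-1=\frac{qt}{1-t}$ and $y-1=\frac{1-t}{t}$. Multiplying through by $t^{n-r(E)}(1-t)^{r(E)}$ and simplifying the powers of $t$, $(1-t)$, and $q$, the goal is to collapse the right-hand side into a single sum over subsets $S\subseteq E$ whose summand is a monomial in $t$ weighted by $q^{r(S)}$ (equivalently $|\pr_S(\HH)|$). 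The key bookkeeping step is tracking how the exponents $r(E)-r(S)$ and $|S|-r(S)$ recombine so that the $q$-dependence appears only through $q^{r(S)}=|\pr_S(\HH)|$.

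Next I would obtain a matching expansion of the left-hand side $W_{\HH}(t)=\sum_{\overline{h}\in\HH}t^{w(\overline{h})}$. The natural move is to rewrite $t^{w(\overline{h})}=t^{n-|\mathrm{id}(\overline{h})|}=\prod_{i\in E}\left(t\cdot[h_i\neq e]+1\cdot[h_i=e]\right)$ is not quite separable, so instead I would view $t^{w(\overline{h})}$ through an inclusion–exclusion over the set of coordinates where $h_i$ is the identity. Writing $t^{w(\overline{h})}=\prod_{i:h_i\neq e}t$ and expanding $t=(1-t)+t$-style indicator sums, one groups codewords by the support set on which they must vanish. Concretely, the plan is to show
\[
W_{\HH}(t)=\sum_{S\subseteq E}t^{|S|}(1-t)^{n-|S|}\cdot\bigl|\{\overline{h}\in\HH:\mathrm{id}(\overline{h})\supseteq E\setminus S\}\bigr|,
\]
where the inner count is exactly the number of codewords supported on $S$. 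This count equals the size of the kernel of the projection $\pr_{E\setminus S}$, which by the first isomorphism theorem for the subgroup $\HH$ relates to $|\pr_S(\HH)|$ and $|\HH|$ via $|\HH|/|\pr_{E\setminus S}(\HH)|=q^{r(E)-r(E\setminus S)}$.

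The heart of the argument, and the step I expect to be the main obstacle, is matching these two subset-indexed sums term by term. For linear codes over a field this matching is the classical computation because the rank function is that of the representable matroid and the codeword counts are clean powers of $q$. For a $\Gamma$-code the rank function $r(S)=\log_{|\Gamma|}|\pr_S(\HH)|$ takes fractional values and $\HH$ need not be abelian, so the isomorphism-theorem counts must be justified at the level of subgroups rather than subspaces. The crucial facts I would need are that $|\pr_S(\HH)|=q^{r(S)}$ by definition and that the number of codewords vanishing off $S$ equals $|\HH|/|\pr_{E\setminus S}(\HH)|=q^{r(E)-r(E\setminus S)}$; the latter holds because projection onto $E\setminus S$ is a group homomorphism whose kernel is precisely the codewords supported on $S$. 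Once these counts are in hand, substituting $r(E)=\log_{|\Gamma|}|\HH|$ and aligning the exponents of $t$, $(1-t)$, and $q$ from both expansions should force the two sums to agree summand by summand, completing the proof. The subtlety to watch is that, because $r$ is only a polymatroid rank (integer-valued here since the $\pr_S(\HH)$ are subgroups, but generally fractional in spirit), one must verify the homomorphism counts directly rather than appealing to dimension formulas, and confirm that no abelian hypothesis sneaks in.
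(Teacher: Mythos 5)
Your proposal is correct and would serve as a complete proof once the sketched steps are written out. For context: the paper does not prove this theorem itself --- it quotes it from Xue, whose proof interprets $t$ as a probability, and the paper only writes out the analogous argument for the dual statement (Theorem \ref{GreeneDual}). Your algebraic reduction of the Tutte side is exactly the one used there: with $x-1=\frac{qt}{1-t}$ and $y-1=\frac{1-t}{t}$, the right-hand side collapses to $\sum_{S\subseteq E} q^{r(E)-r(S)}\,t^{n-|S|}(1-t)^{|S|}$. Where you differ is in handling the weight-enumerator side: instead of a coin-flip double count, you use the binomial identity $t^{w(\overline{h})}=\sum_{S\supseteq E\setminus \mathrm{id}(\overline{h})} t^{|S|}(1-t)^{n-|S|}$ and then count codewords supported on $S$ by the first isomorphism theorem, $\bigl|\{\overline{h}\in\HH: \mathrm{id}(\overline{h})\supseteq E\setminus S\}\bigr| = |\HH|/|\pr_{E\setminus S}(\HH)| = q^{r(E)-r(E\setminus S)}$. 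These are mathematically the same computation --- the binomial expansion is the deterministic content of the probabilistic argument --- but your version is self-contained and makes transparent that the only group theory needed is that coordinate projection is a homomorphism of finite groups, so, as you note, no abelian hypothesis enters and no analogue of the Extension Lemma (which the paper does need for the dual theorem) is required here.

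Two small points to fix when writing it up. First, in your opening paragraph the collapsed Tutte-side weight is $q^{r(E)-r(S)}=|\HH|/|\pr_S(\HH)|$, not $q^{r(S)}=|\pr_S(\HH)|$; your final paragraph works with the correct quantities, but that earlier sentence should be corrected. Second, the term-by-term match is between the subset $S$ on the weight-enumerator side and its complement $E\setminus S$ on the Tutte side; this complementation reindexing should be stated explicitly rather than left implicit in ``aligning the exponents.''
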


We show that there is also a Greene's theorem for $\repH,$ our proposed dual. The proof presented here makes use of an argument similar to that given by Xue.

\begin{thm}
\label{GreeneDual} (Greene's theorem for $\repH$)
Let $\HH$ be an $[n,r']$ $\Gamma$-code with associated polymatroid $P=(E,r).$ Let $q=|\Gamma|$. Then for all $z\in (0,1)$, we have
$$W_{\repH}(z)=(1-z)^{n-r'}z^{r'}T_{P}\left(\frac{1}{z}, \frac{1+(q-1)z}{1-z}\right).$$
\end{thm}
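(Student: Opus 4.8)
The plan is to first rewrite the right-hand side in its corank--nullity (subset-sum) form, and then to match it against a reorganization of $W_{\repH}(z)$ grouped by the support of each representation; the bridge between the two is Lemma 2.2 together with a Möbius inversion on the Boolean lattice.

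First I would expand the Tutte evaluation. Writing $r(E)=r'$ and substituting $x=1/z$, $y=\frac{1+(q-1)z}{1-z}$ into $T_P(x,y)=\sum_{S\subseteq E}(x-1)^{r(E)-r(S)}(y-1)^{|S|-r(S)}$, one computes $x-1=\frac{1-z}{z}$ and $y-1=\frac{qz}{1-z}$. Multiplying by the prefactor $(1-z)^{n-r'}z^{r'}$ and collecting, coordinate by coordinate, the powers of $(1-z)$, $z$, and $q$, the dependence on $r'$ disappears entirely and $r(S)$ survives only in the exponent of $q$, leaving the clean form
$$(1-z)^{n-r'}z^{r'}\,T_P\Big(\tfrac1z,\tfrac{1+(q-1)z}{1-z}\Big)=\sum_{S\subseteq E}(1-z)^{n-|S|}z^{|S|}q^{|S|-r(S)}.$$
So it suffices to prove $W_{\repH}(z)=\sum_{S\subseteq E}(1-z)^{n-|S|}z^{|S|}q^{|S|-r(S)}$.

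Next I would reorganize the left-hand side. For $\rho=\bigotimes\rho_i\in\repH$ set $\operatorname{supp}(\rho)=[n]\setminus\trivr$, so that $w(\rho)=|\operatorname{supp}(\rho)|$; grouping terms by support gives $W_{\repH}(z)=\sum_{T\subseteq E}z^{|T|}G(T)$, where $G(T):=\sum_{\rho\in\repH,\ \operatorname{supp}(\rho)=T}\dim\rho$ counted with multiplicity. A representation of support exactly $T$ is the extension $\overline{\rho'}$ of a unique $\rho'\in\widehat{\G_T}$ all of whose factors are nontrivial, and $\dim\overline{\rho'}=\dim\rho'$; by Lemma 2.2, $\overline{\rho'}\in\repH$ iff $\rho'\in\mathcal{R}(\HH_T)$, so $G(T)$ equals the sum of dimensions of the all-nontrivial members of $\mathcal{R}(\HH_T)$. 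Now let $F(T):=\sum_{\rho'\in\mathcal{R}(\HH_T)}\dim\rho'$ be the total dimension of $\mathcal{R}(\HH_T)$. Since this multiset decomposes the permutation representation $\mathbb{C}[\G_T/\HH_T]$, we get $F(T)=|\G_T/\HH_T|=q^{|T|}/q^{r(T)}=q^{|T|-r(T)}$, using $|\HH_T|=q^{r(T)}$. Partitioning $\mathcal{R}(\HH_T)$ by support $U\subseteq T$ and identifying each support-$U$ piece with the all-nontrivial part of $\mathcal{R}(\HH_U)$ (applying Lemma 2.2 to $\HH_T$ as a code in $\G_T$, using $(\HH_T)_U=\HH_U$) yields $F(T)=\sum_{U\subseteq T}G(U)$, i.e. $q^{|T|-r(T)}=\sum_{U\subseteq T}G(U)$. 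Möbius inversion then gives $G(T)=\sum_{U\subseteq T}(-1)^{|T|-|U|}q^{|U|-r(U)}$, and substituting into $W_{\repH}(z)=\sum_T z^{|T|}G(T)$, exchanging the order of summation, and evaluating $\sum_{W\subseteq E\setminus U}(-z)^{|W|}=(1-z)^{n-|U|}$ produces exactly $\sum_{U\subseteq E}(1-z)^{n-|U|}z^{|U|}q^{|U|-r(U)}$. Reading $z$ as an inclusion probability, this last expression is $\mathbb{E}_U\!\left[q^{|U|-r(U)}\right]$ over a random $U$ keeping each coordinate with probability $z$, which is the probabilistic viewpoint of Xue.

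The main obstacle is justifying the support-decomposition identity $F(T)=\sum_{U\subseteq T}G(U)$ with multiplicities, since $\mathbb{C}[\G_T/\HH_T]$ is generally not multiplicity-free (already $\mathbb{C}[\G]$, the $\HH=\{e\}$ case, contains each $\rho$ with multiplicity $\dim\rho$). The resolution is that the multiplicity of $\overline{\rho'}$ in $\mathcal{R}(\HH_T)$ equals that of $\rho'$ in $\mathcal{R}(\HH_U)$: by Frobenius reciprocity these multiplicities are $\dim(\overline{\rho'})^{\HH_T}$ and $\dim(\rho')^{\HH_U}$, and since the $\HH_T$-action on $\overline{\rho'}=\rho'\otimes\mathbf{1}$ factors through $\pr_U(\HH_T)=\HH_U$, these invariant subspaces coincide. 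The same reasoning (with $\G$ in place of $\G_T$) secures the identification used for $G(T)$ itself. This is the one place where the nonabelian structure genuinely enters; everything else is bookkeeping.
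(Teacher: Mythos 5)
Your proof is correct, and it reaches the paper's key intermediate identity by a genuinely different combinatorial route. Both arguments open the same way, expanding the Tutte evaluation into the corank--nullity sum $\sum_{S\subseteq E}(1-z)^{n-|S|}z^{|S|}q^{|S|-r(S)}$, and both ultimately rest on the same two ingredients: the Extension Lemma (Lemma 2.2) and the count $\sum_{\rho\in\mathcal{R}(\HH_T)}\dim\rho=|\G_T|/|\HH_T|=q^{|T|-r(T)}$. The difference is the middle. The paper runs a probabilistic double count (a coin-flip game won when heads occurs on all of $E-\trivr$), which amounts to the binomial identity $z^{w(\rho)}=\sum_{S\subseteq\trivr}(1-z)^{|S|}z^{n-|S|}$; swapping the order of summation and applying the Extension Lemma once, to the representations trivial on a fixed $S$, finishes the proof with no inversion needed. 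You instead group $W_{\repH}(z)$ by exact support, establish the cumulative identity $\sum_{U\subseteq T}G(U)=q^{|T|-r(T)}$, and recover $G(T)$ by M\"obius inversion on the Boolean lattice; your inversion-then-binomial-expansion is precisely the inverse of the paper's expansion-then-swap, so the two computations are dual to one another. Each approach buys something. The paper's is shorter: one application of the lemma, no inversion, and the exact-support sets never need to be isolated. Yours is purely algebraic (no probabilistic normalization by $1/|\repH|$, which in the paper is slightly awkward since the selection weights $\dim\rho$ do not sum to $|\repH|$), and, more importantly, it supplies the Frobenius-reciprocity argument that extension preserves multiplicities --- the $\HH_T$-invariants of $\rho'\otimes\mathbf{1}$ coincide with the $\HH_U$-invariants of $\rho'$ because the action factors through $\pr_U(\HH_T)=\HH_U$. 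The paper needs this multiset-level statement too (its ``dimension preserving bijection'' is exactly that), but delegates it to the reference where the Extension Lemma is proved; your argument makes the proof self-contained on this point.
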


\begin{proof}
Observe that we can rewrite the right hand side of the equation as follows:
\begin{align*}
\text{RHS} &=(1-z)^{n-r'}z^{r'}\sum_{S\subseteq E} \left(\frac{1}{z}-1\right)^{r'-r(S)}\left(\frac{1+(q-1)z}{1-z}-1\right)^{|S|-r(S)}\\
&=(1-z)^{n-r'}z^{r'}\sum_{S\subseteq E}\frac{(1-z)^{r'-r(S)}}{z^{r'-r(S)}} \frac{(qz)^{|S|-r(S)}}{(1-z)^{|S|-r(S)}}\\
&=\sum_{S\subseteq E}(1-z)^{n-|S|}q^{|S|-r(S)}z^{|S|}\\
&=\sum_{S^c\subseteq E}(1-z)^{|S|}z^{n-|S|}q^{n-|S|-r(E-S).}
\end{align*}
Recalling that $q=|\Gamma|$, we observe:
$$\displaystyle \hspace{.1cm} q^{n-|S|-r(E-S)}=|\Gamma|^{(\log_{|\Gamma|}|\Gamma|^{n-|S|}-\log_{|\Gamma|}|\HH_{E-S}|)}
=\frac{|\Gamma|^{n-|S|}}{|\HH_{E-S}|},$$
and hence
\begin{align*}
\text{RHS}=\sum_{S\subseteq E} \frac{|\Gamma|^{n-|S|}}{|\HH_{E-S}|} (1-z)^{|S|}z^{n-|S|}.
\end{align*}
=
So it is enough to show that $$\sum_{\rho \in \repH} \dim \rho \ z^{w(\rho)}=\sum_{S\subseteq E} \frac{|\Gamma|^{n-|S|}}{|\HH_{E-S}|} (1-z)^{|S|}z^{n-|S|}.$$ We do this via a double counting argument.

Consider playing a game in which a coin is flipped $n$ times and a random $\rho$ from the multiset $\repH$ is selected with weight $\dim \rho$. The coin lands on heads with probability $z,$ and the game is won if the coin lands heads up on flip $i$ for all $i \in (E-\trivr).$

We compute the probability of a win in two ways. First, note that the probability of the coin landing on heads for all $i$ in $w(\rho)$ is $z^{w(\rho)}.$  The weight of $\rho \in \repH$ is $\dim \rho.$ Thus the probability of winning the game is given by $$\displaystyle \frac{1}{|\repH|}\sum_{\rho \in \repH} \dim \rho \ z^{w(\rho)}=\frac{1}{|\repH|}W_{\repH}(z).$$

Computing the probability of a win another way, we observe that the probability of the coin landing heads up on $w(\rho)$ is also given by $$\displaystyle \sum_{S \subseteq \trivr} P(\text{coin is tails on exactly S}).$$ The probability that the coin is tails exactly on the subset $S$ is $(1-z)^{|S|}z^{n-|S|}.$ So the probability of winning is $$\displaystyle \frac{1}{|\repH|}\sum_{\rho \in \repH} \left( \dim \rho \sum_{S \subseteq \trivr} (1-z)^{|S|}z^{n-|S|}\right).$$
Define the indicator function $I_{\rho}: E \rightarrow \{0,1\}$ as follows:
\begin{align*}
I_{\rho} = \begin{cases}
1 & \text{if } S\subseteq \trivr\\
0 & \text{otherwise}
\end{cases}
\end{align*}
This allows us to rewrite the above sum as 
$$\frac{1}{|\repH|}\sum_{S} \left( (1-z)^{|S|} z^{n-|S|} \sum_\rho I_{\rho} \ \dim \rho \right).$$

Given a set $S$, the Extension Lemma in Section 9 of \cite{Swartz2023} implies that there is a dimension preserving bijection between representations of $\repH$ which are trivial on $S$ and representations of $\mathcal{R}(\HH_{E-S}).$ This gives the follow equality: $$\sum_\rho I_{\rho} \ \dim \rho = \left\vert \repH_{E-S} \right\vert = \frac{\left\vert \Gamma \right\vert^{n-|S|}}{|\HH_{E-S}|}.$$

So the probability of winning our game is
$$\frac{1}{|\repH|}\sum_{S} \frac{\left\vert \Gamma \right\vert^{n-|S|}}{|\HH_{E-S}|} (1-z)^{|S|} z^{n-|S|}.$$

Setting the two probabilities of winning our game equal to each we have
$$\displaystyle \frac{1}{|\repH|}\sum_{\rho \in \repH} \dim \rho \ z^{w(\rho)}=\frac{1}{|\repH|}\sum_{S} \frac{\left\vert \Gamma \right\vert^{n-|S|}}{|\HH_{E-S}|} (1-z)^{|S|} z^{n-|S|}$$
which implies
$$ \sum_{\rho \in \repH} \dim \rho \ z^{w(\rho)} = \sum_{S}\frac{|\Gamma|^{n-|S|}}{|\HH_{E-S}|}\ (1-z)^{|S|} z^{n-|S|}$$ as desired.

\vspace{24pt}

\end{proof}

\section{MacWilliams Identities}
Another set of important identities for error-correcting codes are the MacWilliams identities. In this section, we give two such identities for $\Gamma$-codes. The first identity relates the weight enumerator of a code to the weight enumerator of the dual code. Theorem \ref{MWLinear1} gives the first MacWilliams identity for linear codes.

\begin{thm}\label{MWLinear1} (MacWilliams \#1) \cite{MacWilliams1963} Let $\C$ be a $k$ dimensional subspace of $\F_{q}^n.$
\[
W_{\C^{\perp}}(z)=q^{-k}\left(1+(q-1)z\right)^n W_{\C}\left(\frac{1-z}{1+(q-1)z} \right).
\]
\end{thm}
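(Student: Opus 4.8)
The plan is to reduce the stated one-variable identity to the homogeneous two-variable MacWilliams identity and to prove the latter by discrete Poisson summation over the additive group $\F_q^{\,n}$. First I would introduce the homogeneous weight enumerator
\[
W_{\C}(X,Y)=\sum_{c\in\C}X^{\,n-w(c)}Y^{\,w(c)},
\]
so that $W_\C(z)=W_\C(1,z)$, and observe that the claimed formula is exactly $W_{\C^{\perp}}(X,Y)=|\C|^{-1}W_{\C}(X+(q-1)Y,\,X-Y)$ specialized at $X=1$, $Y=z$: factoring $(1+(q-1)z)^{n}$ out of each monomial on the right turns $W_{\C}(1+(q-1)z,\,1-z)$ into $(1+(q-1)z)^{n}W_{\C}\bigl(\tfrac{1-z}{1+(q-1)z}\bigr)$, and $|\C|=q^{k}$ supplies the prefactor. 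Thus it suffices to establish the homogeneous identity.

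The engine is the following summation formula. Fix a nontrivial additive character $\chi$ of $\F_q$ and, for $f\colon\F_q^{\,n}\to\C$, define $\widehat{f}(u)=\sum_{v\in\F_q^{\,n}}\chi(\langle u,v\rangle)f(v)$ with $\langle\cdot,\cdot\rangle$ the standard bilinear form. I would prove that
\[
\sum_{v\in\C^{\perp}}f(v)=\frac{1}{|\C|}\sum_{u\in\C}\widehat{f}(u).
\]
This comes from swapping the order of summation in $\sum_{u\in\C}\widehat f(u)=\sum_{v}f(v)\sum_{u\in\C}\chi(\langle u,v\rangle)$ and evaluating the inner character sum: the map $u\mapsto\langle u,v\rangle$ is an $\F_q$-linear functional on $\C$, so its image is either $\{0\}$ (exactly when $v\in\C^{\perp}$) or all of $\F_q$; orthogonality of characters then forces the inner sum to equal $|\C|$ in the first case and $0$ in the second.

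Next I would apply this to the multiplicative weight function $f(v)=\prod_{i=1}^{n}g(v_i)$ with $g(0)=X$ and $g(a)=Y$ for $a\neq0$, so that $f(v)=X^{\,n-w(v)}Y^{\,w(v)}$ and hence $\sum_{v\in\C^{\perp}}f(v)=W_{\C^{\perp}}(X,Y)$. Because $f$ factors coordinatewise, so does its transform, $\widehat f(u)=\prod_{i}\widehat g(u_i)$, and a one-dimensional computation gives
\[
\widehat{g}(b)=\sum_{a\in\F_q}\chi(ab)\,g(a)=
\begin{cases}
X+(q-1)Y & b=0,\\
X-Y & b\neq0,
\end{cases}
\]
the off-diagonal value using $\sum_{a\in\F_q}\chi(ab)=0$ for $b\neq0$. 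Therefore $\widehat f(u)=(X+(q-1)Y)^{\,n-w(u)}(X-Y)^{\,w(u)}$, so $\sum_{u\in\C}\widehat f(u)=W_{\C}(X+(q-1)Y,\,X-Y)$, and the summation formula yields the homogeneous identity.

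I expect the only genuinely delicate step to be the inner character sum underlying the summation formula: one must pin down that a nonzero $\F_q$-linear functional on $\C$ is surjective onto $\F_q$ (so that the character sum collapses to $0$), and that a nontrivial additive character of the possibly non-prime field $\F_q$ exists and satisfies $\sum_{a\in\F_q}\chi(ab)=0$ for $b\neq0$. Everything after that—the coordinatewise factorization of the transform and the bookkeeping converting the two-variable identity back to the one-variable statement—is routine.
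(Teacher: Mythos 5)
Your proof is correct, and every step checks out: the reduction of the stated one-variable identity to the homogeneous identity $W_{\C^{\perp}}(X,Y)=|\C|^{-1}W_{\C}(X+(q-1)Y,\,X-Y)$ is right, the summation formula $\sum_{v\in\C^{\perp}}f(v)=|\C|^{-1}\sum_{u\in\C}\widehat{f}(u)$ is correctly justified (a nonzero $\F_q$-linear functional on $\C$ is surjective with equal-size fibers, so the character sum collapses), and the coordinatewise computation of $\widehat{g}$ is standard. However, this is a genuinely different route from the one taken here. The paper does not reprove Theorem \ref{MWLinear1} at all; it cites \cite{MacWilliams1963}, and for the identity it actually needs (the group-code analog stated immediately after), it follows Greene's derivation: Theorem \ref{GreeneGroup} writes $W_{\HH}$ as a specialization of the Tutte polynomial of the associated polymatroid $P$, Theorem \ref{GreeneDual} writes $W_{\repH}$ as the specialization of the \emph{same} $T_P$ with its two arguments exchanged, and the MacWilliams identity falls out by substituting one expression into the other, exactly as in \cite{Greene1976}. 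Your approach, which is essentially MacWilliams' original harmonic-analysis argument over the additive group $\F_q^n$, is self-contained and needs no matroid machinery; it is also the argument that scales to finer invariants, since Poisson summation with a coordinatewise weight function proves the complete-weight-enumerator identity just as easily, and its representation-theoretic counterpart (Frobenius reciprocity) is precisely the engine of this paper's proof of Theorem \ref{NewMacWilliams}. What the Tutte-polynomial route buys instead is uniformity: once Greene-type theorems are established for a code and for its dual object, the one-variable MacWilliams identity comes for free from the argument swap, with no character computation, and that is why the paper organizes its nonabelian generalization around Theorems \ref{GreeneGroup} and \ref{GreeneDual} rather than around Fourier analysis.
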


Below we give a nearly identical MacWilliams Identity for group codes with our proposed dual. The proof of this theorem follows from Theorems \ref{GreeneGroup} and \ref{GreeneDual} and is identical to Greene's proof of the MacWilliams identity for linear codes given in \cite{Greene1976}.

\begin{thm}  Let $\HH$ be a $\Gamma$-code realized by polymatroid $P=(E,r).$ Let $q=|\Gamma|$ and $n=|E|.$ Then
\[
W_{\repH}(z)=q^{-r(E)}\left(1+(q-1)z\right)^n W_{\HH}\left(\frac{1-z}{1+(q-1)z} \right).
\]
\end{thm}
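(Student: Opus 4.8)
The plan is to combine the two Greene's theorems already established (Theorem \ref{GreeneGroup} for $W_{\HH}$ and Theorem \ref{GreeneDual} for $W_{\repH}$) by eliminating the common Tutte polynomial factor, exactly as Greene does in the classical linear case. Both weight enumerators are expressed as a prefactor times $T_P$ evaluated at a pair of points, so the strategy is to choose a change of variable in the argument of $W_{\HH}$ that makes the Tutte evaluation coming from Greene's theorem for $\HH$ agree with the Tutte evaluation appearing in Greene's theorem for $\repH$, then solve for $W_{\repH}(z)$.

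Concretely, I would first write down Theorem \ref{GreeneDual} in the form
\[
W_{\repH}(z)=(1-z)^{n-r(E)}z^{r(E)}\,T_{P}\!\left(\tfrac{1}{z},\ \tfrac{1+(q-1)z}{1-z}\right),
\]
using $r'=r(E)$. Next I would apply Theorem \ref{GreeneGroup} not at $t$ but at the substituted value
\[
t=\frac{1-z}{1+(q-1)z},
\]
and compute the two Tutte arguments it produces, namely $\tfrac{1+(q-1)t}{1-t}$ and $\tfrac{1}{t}$. The key algebraic check is that under this substitution these two arguments simplify to $\tfrac{1}{z}$ and $\tfrac{1+(q-1)z}{1-z}$ respectively — i.e.\ the \emph{same pair of points} (in the same order) that appear in Theorem \ref{GreeneDual}. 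This is the involution at the heart of the duality: the map $t\mapsto \tfrac{1-z}{1+(q-1)z}$ is exactly what swaps the roles of the two Tutte variables and is self-inverse as a function of the coding parameter, which is why the MacWilliams identity comes out clean.

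Once the Tutte evaluations coincide, I would equate them: from the substituted Theorem \ref{GreeneGroup} one solves for $T_P$ at the common point in terms of $W_{\HH}\!\left(\tfrac{1-z}{1+(q-1)z}\right)$ and the prefactor $t^{\,n-r(E)}(1-t)^{r(E)}$ with $t=\tfrac{1-z}{1+(q-1)z}$. Substituting this expression for $T_P$ into the displayed form of Theorem \ref{GreeneDual} gives $W_{\repH}(z)$ as an explicit prefactor times $W_{\HH}\!\left(\tfrac{1-z}{1+(q-1)z}\right)$. The remaining task is purely bookkeeping: collecting all the powers of $z$, $(1-z)$, and $1+(q-1)z$ from the two Greene prefactors together with the prefactor generated by the substitution, and verifying they consolidate to $q^{-r(E)}\bigl(1+(q-1)z\bigr)^{n}$. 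In particular the factors of $z$ and $(1-z)$ should cancel completely, leaving only the stated powers of $q$ and of $1+(q-1)z$.

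The main obstacle, and the only place requiring genuine care, is the exponent accounting in this final consolidation: the three prefactors contribute terms like $(1-z)^{n-r(E)}z^{r(E)}$, $t^{\,n-r(E)}(1-t)^{r(E)}$, and the reciprocal prefactor from solving for $T_P$, and after substituting $t=\tfrac{1-z}{1+(q-1)z}$ each factor $t$ and $1-t$ expands into a ratio involving $1+(q-1)z$. One must track that $1-t=\tfrac{qz}{1+(q-1)z}$ (which is where the single surviving power $q^{-r(E)}$ originates) and confirm that the powers of $z$ and $(1-z)$ telescope to zero. Because this is exactly Greene's original argument transplanted to the polymatroid setting — and because Theorems \ref{GreeneGroup} and \ref{GreeneDual} were deliberately written with the two Tutte variables swapped relative to one another — I expect no conceptual difficulty beyond this algebra, and the identity should follow verbatim from Greene's linear-code proof.
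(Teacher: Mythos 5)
Your proposal is correct and is precisely the argument the paper intends: the paper gives no written proof, stating only that the identity ``follows from Theorems \ref{GreeneGroup} and \ref{GreeneDual} and is identical to Greene's proof'' for linear codes, which is exactly the substitution-and-cancellation you carry out. Your algebra checks out --- under $t=\tfrac{1-z}{1+(q-1)z}$ the Tutte arguments of Theorem \ref{GreeneGroup} become $\tfrac{1}{z}$ and $\tfrac{1+(q-1)z}{1-z}$ in the right order, and the prefactors consolidate via $1-t=\tfrac{qz}{1+(q-1)z}$ to $q^{-r(E)}\bigl(1+(q-1)z\bigr)^{n}$ as claimed.
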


The second MacWilliams identity compares the complete weight enumerator of a code with the complete weight enumerator of its dual.  As we did in section 2, we must choose orderings for the conjugacy classes and irreducible representations of $\Gamma$. With this ordering, define matrix $T$ by
$$T_{i,j} := \chi_{i}(c_j).$$ Theorem \ref{MWAbelian2} gives the statement of the second identity for abelian groups codes. In this case, each element is its own conjugacy class. 

\begin{thm} \label{MWAbelian2} (MacWilliams \#2) \cite{OpenQs} Let $\HH$ be a code over abelian group $\Gamma$ with complete weight enumerator $W_{\HH}.$ Let $q=|\Gamma|.$ Then the complete weight enumerator of the dual code $\HH^{\perp}$ is
$$W_{\HH^{\perp}}(x_0, \ldots, x_{q-1})=\frac{1}{|\HH|}W_{\HH}\left((x_0, \ldots, x_{q-1})\cdot T\right).$$

\end{thm}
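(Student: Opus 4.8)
The plan is to establish the identity by a direct character-theoretic expansion, running the computation on the right-hand side until it collapses onto the left. Because $\Gamma$ is abelian, every irreducible representation is a one-dimensional character, the map $\phi\colon\widehat{\Gamma}\to\Gamma$ is a genuine group isomorphism, and $\HH^{\perp}$ is literally a subgroup of $\Gamma^{n}$: under $\phi$ a tuple $\overline{a}=(a_{1},\dots,a_{n})\in\HH^{\perp}$ corresponds to the character $\overline{\chi}_{\overline{a}}=\chi_{a_{1}}\otimes\cdots\otimes\chi_{a_{n}}$ of $\Gamma^{n}$ that is trivial on $\HH$. The matrix $T$ with $T_{i,j}=\chi_{i}(c_{j})$ is then the ordinary character table, and with the row-vector-times-matrix convention the $g$-th entry of $(x_{0},\dots,x_{q-1})\cdot T$ is $\sum_{i}x_{i}\chi_{i}(g)$. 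Writing $x=(x_{0},\dots,x_{q-1})$ for brevity, I would compute $W_{\HH}(x\cdot T)$ explicitly and show it equals $|\HH|\,W_{\HH^{\perp}}(x)$, which is exactly the claimed identity after dividing by $|\HH|$.

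First I would substitute $y_{g}\mapsto\sum_{i}x_{i}\chi_{i}(g)$ into $W_{\HH}(y)=\sum_{\overline{h}\in\HH}\prod_{l=1}^{n}y_{h_{l}}$, obtaining
$$
W_{\HH}(x\cdot T)=\sum_{\overline{h}\in\HH}\prod_{l=1}^{n}\Big(\sum_{i}x_{i}\,\chi_{i}(h_{l})\Big).
$$
Expanding the product over the $n$ coordinates indexes a choice $\overline{a}=(a_{1},\dots,a_{n})$ of one character in each slot, and interchanging the order of summation gives
$$
W_{\HH}(x\cdot T)=\sum_{\overline{a}}\Big(\prod_{l=1}^{n}x_{a_{l}}\Big)\sum_{\overline{h}\in\HH}\prod_{l=1}^{n}\chi_{a_{l}}(h_{l}),
$$
where the inner sum is $\sum_{\overline{h}\in\HH}\overline{\chi}_{\overline{a}}(\overline{h})$, the sum of the character $\overline{\chi}_{\overline{a}}$ of $\Gamma^{n}$ over the subgroup $\HH$.

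The crux is the orthogonality step: restricted to $\HH$, the map $\overline{\chi}_{\overline{a}}$ is a character of the finite group $\HH$, so its sum over $\HH$ equals $|\HH|$ when $\overline{\chi}_{\overline{a}}$ is trivial on $\HH$ and $0$ otherwise. By the description of $\HH^{\perp}$ recalled above, $\overline{\chi}_{\overline{a}}$ is trivial on $\HH$ precisely when $\overline{a}\in\HH^{\perp}$, so only those terms survive, each contributing $|\HH|\prod_{l}x_{a_{l}}$, and
$$
W_{\HH}(x\cdot T)=|\HH|\sum_{\overline{a}\in\HH^{\perp}}\prod_{l=1}^{n}x_{a_{l}}=|\HH|\,W_{\HH^{\perp}}(x_{0},\dots,x_{q-1}),
$$
which rearranges to the statement. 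I expect the main obstacle to be bookkeeping rather than depth: one must verify the subgroup orthogonality relation (a character sum over $\HH$, not over all of $\Gamma^{n}$), confirm that ``trivial on $\HH$'' matches membership in $\HH^{\perp}$ under the chosen isomorphism $\phi$, and pin down the convention for $(x_{0},\dots,x_{q-1})\cdot T$ against the definition $T_{i,j}=\chi_{i}(c_{j})$ so that the substitution $y_{g}=\sum_{i}x_{i}\chi_{i}(g)$ is the correct one.
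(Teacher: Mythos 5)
Your proof is correct, but there is nothing in the paper to compare it against directly: Theorem \ref{MWAbelian2} is quoted from the literature (\cite{OpenQs}) and the paper gives no proof of it, using it only as motivation for the nonabelian generalization, Theorem \ref{NewMacWilliams}. The meaningful comparison is therefore with the paper's proof of that generalization, and there your argument is essentially the abelian specialization of the same strategy: both expand the complete weight enumerator after substituting $(x_0,\dots,x_{q-1})\cdot T$, interchange summation, and reduce the coefficient of each monomial $\prod_l x_{a_l}$ to the character sum $\sum_{\overline{h}\in\HH}\overline{\chi}_{\overline{a}}(\overline{h})$ over the subgroup $\HH$. The difference lies in how that sum is interpreted. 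The paper invokes Frobenius reciprocity, $\langle \repH, \rho_{\overline{a}}\rangle_{\G} = \langle \rho_0, \mathrm{Res}_{\HH}^{\G}\rho_{\overline{a}}\rangle_{\HH}$, to read the sum as a multiplicity in $\repH$; this is forced in the nonabelian setting, where multiplicities can exceed one and the dual is only a multiset of representations. You instead evaluate the sum outright as $|\HH|$ or $0$ by orthogonality of one-dimensional characters --- a step that is only available because $\Gamma$ is abelian, and which is precisely what lets the surviving terms be identified with a genuine subgroup $\HH^{\perp}\leq\Gamma^n$ rather than a multiset. Your argument thus buys a self-contained, elementary proof of the abelian statement, while the paper's Frobenius-reciprocity formulation is the one that survives the passage to nonabelian $\Gamma$. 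The three bookkeeping points you flagged (character sums over a subgroup vanish unless trivial, ``trivial on $\HH$'' equals membership in $\HH^{\perp}$ under $\phi$, and the row-times-matrix convention $v_j=\sum_i x_i\chi_i(c_j)$) all resolve exactly as you anticipated, so the proof stands as written.
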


To aide notation in the following theorem, we let $\overline{v}=(v_1, \ldots, v_k):=  \begin{bmatrix} x_1 & x_2 & \cdots &  x_k\end{bmatrix} \cdot T .$

\begin{thm} \label{NewMacWilliams}
Let $\HH$ be a $\Gamma$-code, $k$ the number of conjugacy classes of $\Gamma,$ and $\overline{v}$ be defined as above. Then
\begin{align}
 \emph{\text{cwe}}_{\repH}(x_1, \ldots, x_k) = \frac{1}{|\HH|} \emph{\text{cwe}}_{\HH}  (\overline{v}).
 \label{cwe}
 \end{align}
\end{thm}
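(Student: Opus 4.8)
The plan is to prove the identity by computing both sides as sums over tuples $\overline{j} \in J$ and matching coefficients of the monomials in the $x$ variables. The right-hand side $\frac{1}{|\HH|}\,\text{cwe}_{\HH}(\overline{v})$ is obtained by substituting each $y_i = v_i = \sum_{j} T_{j,i}\, x_j = \sum_j \chi_j(c_i)\, x_j$ into $\text{cwe}_{\HH}(y_1,\ldots,y_k) = \sum_{\overline{i}} A_{\overline{i}} \prod_{\ell} y_{i_\ell}$. The left-hand side is $\text{cwe}_{\repH}(x_1,\ldots,x_k) = \sum_{\overline{j}} \langle \chi_{\repH}, \chi_{\overline{j}}\rangle \prod_\ell x_{j_\ell}$. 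So after expanding the substitution on the right, the whole theorem reduces to showing, for every tuple $\overline{j}$, the scalar identity
\begin{equation*}
\langle \chi_{\repH}, \chi_{\overline{j}}\rangle = \frac{1}{|\HH|} \sum_{\overline{i}} A_{\overline{i}} \prod_{\ell=1}^n \chi_{j_\ell}(c_{i_\ell}).
\end{equation*}

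First I would interpret the right-hand sum representation-theoretically. Since $A_{\overline{i}}$ counts the codewords $\overline{h}=(h_1,\ldots,h_n)\in\HH$ whose $\ell$-th coordinate lies in conjugacy class $c_{i_\ell}$, and since characters are constant on conjugacy classes, the double sum collapses into a single sum over codewords:
\begin{equation*}
\frac{1}{|\HH|}\sum_{\overline{i}} A_{\overline{i}}\prod_\ell \chi_{j_\ell}(c_{i_\ell}) = \frac{1}{|\HH|}\sum_{\overline{h}\in\HH}\prod_{\ell=1}^n \chi_{j_\ell}(h_\ell) = \frac{1}{|\HH|}\sum_{\overline{h}\in\HH}\chi_{\overline{j}}(\overline{h}),
\end{equation*}
using $\chi_{\overline{j}}=\prod_\ell \chi_{j_\ell}$. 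This last quantity is exactly $\langle \chi_{\overline{j}}, \chi_{\mathrm{triv}}\rangle_{\HH}$, the inner product of $\chi_{\overline{j}}$ restricted to $\HH$ with the trivial character of $\HH$, i.e. the multiplicity of the trivial representation of $\HH$ inside $\mathrm{Res}^{\G}_{\HH}\rho_{\overline{j}}$.

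The core of the argument is then a Frobenius reciprocity computation. By definition, $\mathcal{R}_2(\HH)$ is the representation of $\G$ induced from the trivial representation of $\HH$, namely $\mathrm{Ind}^{\G}_{\HH}\mathbf{1}$, and $\chi_{\repH}=\chi_{\mathcal{R}_2(\HH)}$. Therefore
\begin{equation*}
\langle \chi_{\repH}, \chi_{\overline{j}}\rangle_{\G} = \langle \mathrm{Ind}^{\G}_{\HH}\mathbf{1},\ \rho_{\overline{j}}\rangle_{\G} = \langle \mathbf{1},\ \mathrm{Res}^{\G}_{\HH}\rho_{\overline{j}}\rangle_{\HH} = \frac{1}{|\HH|}\sum_{\overline{h}\in\HH}\chi_{\overline{j}}(\overline{h}),
\end{equation*}
which matches the expression derived above and closes the proof. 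I would verify at the outset that $\text{triv}$ and the trivial representation are handled consistently so that $\rho_{\overline{j}}$ really is the irreducible of $\widehat{\G}$ attached to $\overline{j}$. This connects to the \emph{faster} computation mentioned in Example \ref{rHexample}: Frobenius reciprocity is precisely what lets one read off the multiplicities in $\repH$ without diagonalizing the permutation character directly.

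The main obstacle I anticipate is purely bookkeeping rather than conceptual: one must be careful that the matrix $T_{i,j}=\chi_i(c_j)$ is contracted on the correct index when forming $\overline{v}=\overline{x}\cdot T$, so that the substitution $y_i \mapsto v_i$ indeed produces $\chi_{j_\ell}(c_{i_\ell})$ with the representation index summed against $x$ and the conjugacy-class index matching the coordinate data $A_{\overline{i}}$. Getting this transpose/indexing convention right is what makes the monomial coefficients on the two sides line up term by term; once the identification $\chi_{\repH}=\chi_{\mathrm{Ind}^{\G}_{\HH}\mathbf{1}}$ is in hand, the representation theory itself is a one-line application of Frobenius reciprocity.
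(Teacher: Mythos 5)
Your proposal is correct and follows essentially the same route as the paper's proof: expand $\text{cwe}_{\HH}(\overline{v})$ over tuples, identify $\langle \chi_{\repH}, \chi_{\overline{j}}\rangle$ via Frobenius reciprocity applied to $\repH = \mathrm{Ind}_{\HH}^{\G}\mathbf{1}$, and use the fact that characters are constant on conjugacy classes to collapse the $A_{\overline{i}}$-weighted sum into a sum over codewords (the paper phrases this last step as a double-counting argument). Your version is, if anything, slightly tidier in tracking the $\frac{1}{|\HH|}$ normalization of the inner product over $\HH$.
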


\begin{proof}

We start by rewriting the RHS of the equation. Let $I=\left\{\overline{i}=(i_1, i_2, \ldots, i_n) \mid 0 \leq i_1,  \ldots, i_n \leq k \right\}.$ Then 
$\displaystyle \text{cwe}_{\HH}(\overline{v})=\sum_{\overline{i}\in I} A_{\overline{i}} \prod_{i \in \overline{i}} v_i ,$
where $\displaystyle v_i=\sum_{p=1}^k \chi_{p}(c_i)x_p.$ Making this substitution and reindexing, we have 
\begin{align*}
\text{cwe}_{\HH}(\overline{v})&=\sum_{\overline{i}} A_{\overline{i}} \prod_{i \in \overline{i}} \left(\sum_{p=1}^k \chi_{p}(c_i)x_p \right)\\
&=\sum_{\overline{i}} A_{\overline{i}} \prod_{m=1}^n \left( \sum_{p_m =1}^{k} \chi_{p_m}(c_{i_m})x_{p_m} \right).
\end{align*}
Consider the product inside the outer sum: $$\prod_{m=1}^n \left( \sum_{p_m =1}^{k} \chi_{p_m}(c_{i_m})x_{p_m} \right)=\prod_{m=1}^n \Big( \chi_{1}(c_{i_m})x_{1}+\cdots + \chi_{k}(c_{i_m})x_{k} \Big).$$ An arbitrary term in this product is of the form $$\chi_{p_1}(c_{i_1})x_{p_1}\cdot \chi_{p_2}(c_{i_2})x_{p_2} \cdots  \chi_{p_n}(c_{i_n})x_{p_n},$$ where $1 \leq p_m \leq k$ for all $m.$ Furthermore, these terms are in one-one correspondence with the tuples in $P=\{\overline{p}=(p_1, \ldots, p_n) \mid 1 \leq  p_m \leq k \}.$ 
Hence 
$$\prod_{m=1}^n \left( \sum_{p_m =1}^{k} \chi_{p_m}(c_{i_m})x_{p_m} \right) =\sum_{\overline{p}} \prod_{m=1}^n \left(\chi_{p_m}(c_{i_m})x_{p_m} \right),
$$ and in particular, 
$$\text{cwe}_{\HH}(\overline{v})=\sum_{\overline{i}} A_{\overline{i}}  \sum_{\overline{p}} \prod_{m=1}^n \left(\chi_{p_m}(c_{i_m})x_{p_m} \right).$$

We now find a similar expression for the left hand side of (\ref{cwe}). By definition,
$$\text{cwe}_{\repH}(x_1, \ldots, x_k)= \sum_{\overline{j}}\innerproduct{\repH}{\chi_{\overline{j}}} \prod_{j\in \overline{j}} x_{j}.$$ 
Recall that $\repH$ can be viewed as the representation of $\G$ induced by the trivial representation on $\HH$. Then Frobenius Reciprocity tells us that the number of copies of $\rho_j$ in $\repH$ is equal to the number of copies of the trivial representation $\rho_0$ of $\HH$ in the restriction of $\rho_j$ to $\HH$ \cite{Serre}. That is, $$\langle \repH, \rho_{\overline{j}} \rangle_{\G} =  \langle \rho_0, \text{Res}_{\HH}^{\G}\rho_{\overline{j}} \rangle_{\HH},$$ where $\text{Res}_{\HH}^{\G}\rho_{\overline{j}}$ is the restriction of $\rho_{\overline{j}}$ to the subgroup $\HH$. Computing the inner product on the right, we find $$\langle \repH, \rho_{\overline{j}} \rangle = \sum_{\overline{h}\in \HH}\chi_{\overline{j}}(\overline{h}).$$ This gives 

\begin{align*}
|\HH|\cdot \text{cwe}_{\repH}(x_1, \ldots x_k)&= \sum_{\overline{h} \in \HH}\left( \sum_{\overline{j}} \chi_{\overline{j}}(\overline{h}) \prod_{j \in \overline{j}} x_j \right)\\
&=\sum_{\overline{h}} \sum_{\overline{j}} \prod_{m=1}^n \chi_{j_m}(h_m)x_{j_m}
\end{align*}

Finally, to prove our theorem, it remains to show that
\begin{align*}
\sum_{\overline{i}} A_{\overline{i}}  \sum_{\overline{p}} \prod_{m=1}^n \left(\chi_{p_m}(c_{i_m})x_{p_m} \right)=\sum_{\overline{h}} \sum_{\overline{j}} \prod_{m=1}^n \chi_{j_m}(h_m)x_{j_m}.
\end{align*}
We do this via another double counting argument. Consider conjugacy class $\overline{i}$ of $\Gamma^n$. For each $\overline{h}$ in $\HH$ of this conjugacy class, we get a contribution of $\displaystyle \sum_{\overline{j}}\prod_{m=1}^n \chi_{j_m}(h_m)x_{j_m}$ to our overall sum on the right. This term is the same for each $\overline{h}$ in this conjugacy class. On the left hand side we get this same term, with the coefficient $A_{\overline{i}}$ which is defined to be the number of elements of $\HH$ in conjugacy class $\overline{i}$. So the contribution from this conjugacy class is the same on both sides of the equation. This is true for all conjugacy classes of $\Gamma^n$, so the two sums are equal, and MacWilliam's identity holds.

\end{proof} 

In the example below, we show how this identity can be used to quickly compute $\repH$ for some dual codes. This example also illustrates an instance when multiple copies of the same representation appear in $R(\HH).$

\vspace{12pt}

\begin{ex} 
Let $\Gamma=S_3.$ And let $\HH=\{(\sigma, \ldots, \sigma): \sigma \in \Gamma\}$ be the diagonal subgroup of $\Gamma^n.$ Let variable $x_1$ correspond to the identity in $S_3$, $x_2$ correspond to the conjugacy class of transpositions, and $x_3$ correspond to the three cycles. Then the complete weight enumerator of $\HH$ is $x_1^n+3x_2^n+2x_3^n.$ Now let $x_1$ correspond to the trivial representation, $x_2$ to the sign representation, and $x_3$ to the 2 dimensional representation of $S_3.$ The vector $\overline{v}$ is the same whenever $\Gamma=S_3.$ It is $\overline{v}=[x_1+x_2+2x_3, x_1-x_2, x_1+x_2-x_3].$ Thus the following formula describes the complete weight enumerator of $\repH$:
$$\text{cwe}_{\repH}(x_1, x_2, x_3)=\frac{1}{6} \biggl((x_1+x_2+2x_3)^n+3(x_1-x_2)^n+2(x_1+x_2-x_3)^n\biggr).$$

We can use the above formula for the complete weight enumerator to help us compute $\repH$ for $\HH$ the diagonal subgroup. For example, when $n=4$ the complete weight enumerator is $$\text{cwe}_{\repH}=x_1^4+6x_1^2x_2^2+6x_1^2x_3^2+12x_1x_2x_3^2+4x_1x_3^3+x_2^4+6x_2^2x_3^2+4x_2x_3^3+3x_3^4.$$ Let $1, s,$ and $t$ be, respectively, the trivial, sign, and two dimensional representations of $S_3.$ The first term of $\text{cwe}_{\repH}$ indicates that exactly one copy of $1\otimes 1\otimes 1\otimes 1$ is in $\repH.$ The term $6x_1^2x_2^2$ indicates that there is at least one representation in $\repH$ which is the tensor product of two copies of the trivial representation of $\Gamma$ and two copies of the sign representation of $\Gamma$. There are six ways to order these representations: $$1\otimes 1\otimes s \otimes s,\ 1\otimes s\otimes 1 \otimes s,\ 1\otimes s\otimes s \otimes 1,\ s\otimes 1\otimes 1 \otimes s,\ s\otimes 1\otimes s \otimes 1,\ s\otimes s\otimes 1 \otimes 1 .$$ Because $\HH$ is symmetric over the four coordinates, one of these six representations in $\repH$ indicates that all six such representations are in $\repH.$ The coefficient of six in this term indicates that they must each appear exactly once in $\repH$. 

The last term, $3x^4$, shows why $\repH$ must be a multiset. The coefficient of 3 on the term indicates that three copies of the representation $t\otimes t \otimes t \otimes t$ appear in $\repH.$ Continuing this reasoning for each term in $\text{cwe}_{\repH}$, we can compute the remainder of the representations in $\repH.$

\end{ex}

\section{Discussion}

While we are able to resolve the question of a duality for nonabelian group codes, whether or not this work has practical applications has yet to be determined. It is not clear what the best method for turning $\repH$ into a code is. Among the challenges that remain is determining how to distinguish between multiple copies of the same representation in $\repH.$ One possibility for dealing with the multiplicity is to assign a different code word to each representation in $\repH$, giving different labels to copies of representations which appear multiple times. A second choice might be to assign a different label to each potential representation of $\Gamma$, including assigning multiple labels to those representations whose weight is greater than 1. For instance, in Example \ref{rHexample}, the representation $t$ of $S_3$ has dimension 2, so we could consider two copies of $t$ and label them $t_1$ and $t_2$. Under this scheme, our codewords would be
$$\repH=\{1\otimes 1, 1\otimes s, t_1\otimes 1, t_2\otimes 1, t_1\otimes s, t_2\otimes s\}.$$

Regardless of whether we make one of these choices, or address multiplicity in some other fashion, we face a further challenge: creating an algorithm that will allow for rapid encoding and decoding of codewords.

\section{Acknowledgements}
I would like to thank Marcelo Aguiar for his question that introduced me to the relationship between codes and matroids. I would also like to thank my adviser, Ed Swartz, for introducing me to this problem and for many helpful conversations.

\bibliography{MacWilliamsPaper_V3} 
\bibliographystyle{plain} 

\end{document}